\newtheorem{theorem}{Theorem}
\newtheorem{lemma}{Lemma}
\newtheorem{definition}{Definition}
\newtheorem{example}{Example}
\begin{document}
\title{Locking and unlocking of quantum nonlocality without entanglement \\
in local discrimination of quantum states}
\author{Donghoon Ha}
\affiliation{Department of Applied Mathematics and Institute of Natural Sciences, Kyung Hee University, Yongin 17104, Republic of Korea}
\author{Jeong San Kim}
\email{freddie1@khu.ac.kr}
\affiliation{Department of Applied Mathematics and Institute of Natural Sciences, Kyung Hee University, Yongin 17104, Republic of Korea}
\begin{abstract}
The phenomenon of nonlocality without entanglement(NLWE) arises in discriminating multi-party quantum separable states. Recently, it has been found that the post-measurement information about the prepared subensemble can lock or unlock NLWE in minimum-error discrimination of non-orthogonal separable states. Thus, it is natural to ask whether the availability of the post-measurement information can influence on the occurrence of NLWE even in other state-discrimination strategies. Here, we show that the post-measurement information can be used to lock as well as unlock the occurrence of NLWE in terms of optimal unambiguous discrimination. Our results can provide a useful application for hiding or sharing information based on non-orthogonal separable states.
\end{abstract}
\maketitle
\section*{INTRODUCTION}
\indent Quantum nonlocality is of central importance in multi-party quantum systems.
A typical phenomenon of quantum nonlocality is
quantum entanglement which is a useful resource for multi-party quantum communication\cite{horo2009}.
Quantum entanglement is the correlation that cannot be shared among multiple parties using only \emph{local operations and classical communication}(LOCC)\cite{horo2009,chit20142,chit2019}. 
However, it is also known that some nonlocal phenomena in multi-party quantum systems are still possible even in the absence of quantum entanglement.\\ 
\indent \emph{Nonlocality without entanglement}(NLWE) is another nonlocal phenomenon that arises in discriminating  non-entangled states of multi-party quantum systems\cite{pere1991,benn19991}.
NLWE occurs when what can be achieved with global measurement in discriminating non-entangled states cannot be achieved only by LOCC. In the case of discriminating orthogonal non-entangled states, NLWE occurs when the perfect discrimination cannot be implemented by LOCC\cite{benn19991,divi2003,nise2006,xu20162,hald2019}. On the other hand, in the case of discriminating non-orthogonal non-entangled states, NLWE occurs when the globally optimal discriminations such as minimum-error discrimination\cite{hels1976,hole1979,yuen1975,bae2013} or optimal unambiguous discrimination\cite{ivan1987,pere1988,diek1988,jaeg1995} cannot be implemented by LOCC\cite{chit2013,duan2007,ha20211,ha20212}. 
We also note that some non-local phenomena without entanglement can occur in the generalized probabilistic theories beyond quantum theory\cite{bhat2020}.\\
\indent In quantum state discrimination\cite{chef2000,barn20091,berg2010,bae2015}, orthogonal states can be perfectly discriminated, whereas non-orthogonal states cannot. 
However, some non-orthogonal states 
can be perfectly discriminated when the post-measurement information about the prepared subensemble is available\cite{akib2019}. 
Nevertheless, some non-orthogonal states cannot be perfectly discriminated even the post-measurement information about the prepared subensemble is provided\cite{ball2008,gopa2010,carm2018}.
Therefore, in optimal discriminations with the post-measurement information about the prepared subensemble, the NLWE phenomenon arises
when the globally optimal discrimination cannot be implemented by LOCC with the help of post-measurement information.
Recently, it was shown that the availability of post-measurement information can lock or unlock NLWE in terms of minimum-error discrimination\cite{ha20213}, therefore it is natural to ask whether the post-measurement information affects the occurrence of NLWE in terms of state-discrimination strategies other than minimum-error discrimination.\\ 
\indent Here, we show that even in optimal unambiguous discrimination,
the availability of the post-measurement information about the prepared subensemble can affect the occurrence of NLWE.
We first provide an ensemble of two-qubit product states having NLWE in terms of optimal unambiguous discrimination, and show that the availability of post-measurement information about the prepared subensemble vanishes the occurrence of NLWE, therefore \emph{locking NLWE in terms of optimal unambiguous discrimination by post-measurement information}. We further provide another ensemble of two-qubit product state that does not have NLWE in terms of optimal unambiguous discrimination, and show that NLWE in the optimal unambiguous discrimination can be released when the post-measurement information about the prepared subensemble is provided. Thus \emph{unlocking NLWE in terms of optimal unambiguous discrimination by post-measurement information}.\\
\indent This paper is organized as follows. 
First, we present the form of two-qubit product state ensemble to be considered.
In the ``\hyperref[mtdsec]{Methods}'' Section,
we review the definitions and properties
with respect to optimal unambiguous discrimination without and with post-measurement information and provide some useful lemmas in optimal local discrimination. As a main result of this paper, 
we provide a quantum state ensemble
consisting of four two-qubit product states
and show the occurrence of NLWE in terms of optimal unambiguous discrimination.
With the same ensemble, we further show that 
NLWE does not occur in the optimal unambiguous discrimination with the post-measurement information about the prepared subensemble is available.
As another main result of this paper, 
we provide another quantum state ensemble consisting of four two-qubit product states and show the non-occurrence of NLWE in terms of optimal unambiguous discrimination. With the same ensemble, we further show that NLWE occurs in the optimal unambiguous discrimination with the post-measurement information about the prepared subensemble.
\section*{RESULTS}
\indent Throughout this paper, we only consider
the situation 
of unambiguously discriminating four states from the quantum state ensemble, 
\begin{equation}\label{eq:ense}
\mathcal{E}=\{\eta_{i},\rho_{i}\}_{i\in\Lambda},\ \Lambda=\{0,1,+,-\},
\end{equation}
where $\rho_{i}$ is a $2\otimes2$ non-entangled pure state, 
\begin{equation}\label{eq:varphis}
\rho_{i}=|\varphi_{i}\rangle\!\langle\varphi_{i}|
\ \ \mbox{for each $i\in\Lambda$},
\end{equation}
and $\{|\varphi_{i}\rangle\}_{i\in\Lambda}$
is a product basis of $\mathcal{H}$.
Each $\eta_{i}$ is the probability that
the state $\rho_{i}$ is prepared.\\
\indent The ensemble $\mathcal{E}$ can be seen as an ensemble consisting of 
two subensembles,
\begin{equation}\label{eq:suben}
\begin{array}{ll}
\mathcal{E}_{0}=\{\eta_{i}/\sum_{j\in\mathsf{A}_{0}}\eta_{j},\rho_{i}\}_{i\in\mathsf{A}_{0}},& 
\mathsf{A}_{0}=\{\,0\,,\,1\,\},\\[1mm]
\mathcal{E}_{1}=\{\eta_{i}/\sum_{j\in\mathsf{A}_{1}}\eta_{j},\rho_{i}\}_{i\in\mathsf{A}_{1}},&
\mathsf{A}_{1}=\{+,-\},
\end{array}
\end{equation}
where $\mathcal{E}_{0}$ and $\mathcal{E}_{1}$ are prepared with probabilities $\sum_{j\in\mathsf{A}_{0}}\eta_{j}$ and $\sum_{j\in\mathsf{A}_{1}}\eta_{j}$, respectively.
The definitions and properties related to optimal unambiguous discrimination of $\mathcal{E}$ without and with post-measurement information are provided in the ``\hyperref[mtdsec]{Methods}'' Section.\\
\indent Before we deliver our main results in the following subsections, we first provide the concepts of NLWE, NLWE with post-measurement information, and locking/unlocking NLWE by post-measurement information.
\begin{definition}\label{def:nlwe}
For optimal unambiguous discrimination of a separable ensemble $\mathcal{E}$ in Eq.~\eqref{eq:ense}, NLWE occurs if and only if optimal unambiguous discrimination of $\mathcal{E}$ cannot be realized only by LOCC measurements, that is,
\begin{equation}\label{eq:dnlwe}
p_{\rm L}(\mathcal{E})<p_{\rm G}(\mathcal{E}).
\end{equation}
\end{definition}
\indent In discriminating orthogonal non-entangled states, NLWE  occurs when the perfect discrimination cannot be realized by LOCC.
Thus, the NLWE phenomenon of orthogonal non-entangled states is a special case of the NLWE phenomenon defined in Definition~\ref{def:nlwe}, that is, $p_{\rm L}(\mathcal{E})<p_{\rm G}(\mathcal{E})=1$. In the following definition, we provide the concept of NLWE in optimal unambiguous discrimination of $\mathcal{E}$ when the post-measurement information about the prepared subensemble is available. 
\begin{definition}
For optimal unambiguous discrimination of a separable ensemble $\mathcal{E}$ in Eq.~\eqref{eq:ense} with post-measurement information $b\in\{0,1\}$ about the prepared subensemble $\mathcal{E}_{b}$ in Eq.~\eqref{eq:suben},
NLWE occurs if and only if optimal unambiguous discrimination of $\mathcal{E}$ with post-measurement information cannot be realized only by LOCC measurements, that is,
\begin{equation}
p_{\rm L}^{\rm PI}(\mathcal{E})<p_{\rm G}^{\rm PI}(\mathcal{E}).
\end{equation}
\end{definition}
\indent Now, we provide the concepts of \emph{locking} and \emph{unlocking} NLWE by post-measurement information in optimal unambiguous discrimination of $\mathcal{E}$.
\begin{definition}
Let us consider the optimal unambiguous discrimination of a separable ensemble $\mathcal{E}$ in Eq.~\eqref{eq:ense}. 
\begin{enumerate}[leftmargin=4.6mm]
\item[1.] The post-measurement information $b\in\{0,1\}$ about the prepared subensemble $\mathcal{E}_{b}$ in Eq.~\eqref{eq:suben} \emph{locks} NLWE if NLWE occurs in discriminating the states of $\mathcal{E}$,
\begin{equation}
p_{\rm L}(\mathcal{E})<p_{\rm G}(\mathcal{E}),
\end{equation}
whereas
NLWE does not occur when the post-measurement information $b$ about the prepared subensemble is available, 
\begin{equation}
p_{\rm L}^{\rm PI}(\mathcal{E})=p_{\rm G}^{\rm PI}(\mathcal{E}).
\end{equation}
\item[2.] The post-measurement information $b$ about the prepared subensemble $\mathcal{E}_{b}$ \emph{unlocks} NLWE if NLWE does not occur in discriminating the states of $\mathcal{E}$,
\begin{equation}
p_{\rm L}(\mathcal{E})=p_{\rm G}(\mathcal{E}),
\end{equation} 
whereas NLWE occurs when the post-measurement information $b$ about the prepared subensemble is available, 
\begin{equation}
p_{\rm L}^{\rm PI}(\mathcal{E})<p_{\rm G}^{\rm PI}(\mathcal{E}).
\end{equation}
\end{enumerate}
\end{definition}

\subsection*{Locking NLWE by post-measurement information in optimal unambiguous discrimination}
In this section, we consider a situation where the post-measurement information about the prepared subensemble $\mathcal{E}_{b}$ \emph{locks} NLWE in terms of optimal unambiguous discrimination.
We first provide a specific example of 
a state ensemble $\mathcal{E}$ and show that NLWE in terms of optimal unambiguous discrimination occurs. With the same ensemble, we further show that the occurrence of NLWE in terms of optimal unambiguous discrimination can be vanished when post-measurement information is provided, thus locking NLWE by post-measurement information.
\begin{example}[\cite{ha20213}]\label{ex:lock}
Let us consider the ensemble $\mathcal{E}$ in Eq.~\eqref{eq:ense} with
\begin{equation}\label{eq:ftqs02}
\begin{array}{lcllcllcl}
\eta_{0}&=&\frac{\gamma}{2(1+\gamma)},&
\rho_{0}&=&|\varphi_{0}\rangle\!\langle\varphi_{0}|,&
|\varphi_{0}\rangle&=&
|0\rangle\otimes|0\rangle,\\[1mm]
\eta_{1}&=&\frac{\gamma}{2(1+\gamma)},&
\rho_{1}&=&|\varphi_{1}\rangle\!\langle\varphi_{1}|,&
|\varphi_{1}\rangle&=&
|0\rangle\otimes|1\rangle,\\[1mm]
\eta_{+}&=&\frac{1}{2(1+\gamma)},& 
\rho_{+}&=&|\varphi_{+}\rangle\!\langle\varphi_{+}|,&
|\varphi_{+}\rangle&=&
|+\rangle\otimes|+\rangle,\\[1mm]
\eta_{-}&=&\frac{1}{2(1+\gamma)},& 
\rho_{-}&=&|\varphi_{-}\rangle\!\langle\varphi_{-}|,&
|\varphi_{-}\rangle&=&
|-\rangle\otimes|-\rangle,
\end{array}
\end{equation}
where $2\leqslant\gamma<\infty$,  $\{|0\rangle,|1\rangle\}$ is the standard basis in one-qubit system, and
$|\pm\rangle=\frac{1}{\sqrt{2}}(|0\rangle\pm|1\rangle)$.
In this case, the subensembles in Eq.~\eqref{eq:suben} become
\begin{equation}
\begin{array}{l}
\mathcal{E}_{0}=\{\frac{1}{2},|0\rangle\!\langle0|\!\otimes\!|0\rangle\!\langle0|,\ 
\frac{1}{2},|0\rangle\!\langle0|\!\otimes\!|1\rangle\!\langle1|\},\\[1mm]
\mathcal{E}_{1}=\{\frac{1}{2},|+\rangle\!\langle+|\!\otimes\!|+\rangle\!\langle+|,\ 
\frac{1}{2},|-\rangle\!\langle-|\!\otimes\!|-\rangle\!\langle-|\},
\end{array}
\end{equation}
with the probabilities of preparation  
$\frac{\gamma}{1+\gamma}$ and $\frac{1}{1+\gamma}$, respectively.
\end{example}
\indent To show the occurrence of NLWE in terms of optimal unambiguous discrimination about the ensemble $\mathcal{E}$ in Example~\ref{ex:lock}, we first evaluate
the optimal success probability $p_{\rm G}(\mathcal{E})$ defined in Eq.~\eqref{eq:pgeoud} of the ``\hyperref[mtdsec]{Methods}'' Section.
The reciprocal vectors $\{|\tilde{\varphi}_{i}\rangle\}_{i\in\Lambda}$ corresponding to
$\{|\varphi_{i}\rangle\}_{i\in\Lambda}$ defined in Eq.~\eqref{eq:ftqs02} are 
\begin{equation}
\begin{array}{l}
|\tilde{\varphi}_{0}\rangle=\sqrt{2}|\Phi_{-}\rangle,\ 
|\tilde{\varphi}_{+}\rangle=\sqrt{2}|1+\rangle,\\[1mm]
|\tilde{\varphi}_{1}\rangle=\sqrt{2}|\Psi_{-}\rangle,\ 
|\tilde{\varphi}_{-}\rangle=-\sqrt{2}|1-\rangle,
\end{array}
\end{equation}
where
\begin{equation}
\begin{array}{c}
|\Phi_{\pm}\rangle=\frac{1}{\sqrt{2}}|00\rangle\pm\frac{1}{\sqrt{2}}|11\rangle,\
|\Psi_{\pm}\rangle=\frac{1}{\sqrt{2}}|01\rangle\pm\frac{1}{\sqrt{2}}|10\rangle.
\end{array}
\end{equation}
We can easily verify that the following $\{M_{i}\}_{i\in\overline{\Lambda}}$ is an unambiguous measurement
satisfying the error-free condition in Eq.~\eqref{eq:udm}: 
\begin{equation}\label{eq:oudm1}
\begin{array}{lcl}
M_{0}=|\Phi_{-}\rangle\!\langle\Phi_{-}|,\
M_{+}=0,\\[1mm]
M_{1}=|\Psi_{-}\rangle\!\langle\Psi_{-}|,\
M_{-}=0,\\[1mm]
M_{?}=\mathbbm{1}-|\Phi_{+}\rangle\!\langle\Phi_{+}|
-|\Psi_{+}\rangle\!\langle\Psi_{+}|.
\end{array}
\end{equation}
Also, it is optimal because Condition~\eqref{eq:kkt} holds for this unambiguous measurement along with a positive-semidefinite operator
\begin{equation}
\begin{array}{c}
K=\frac{\gamma}{4(1+\gamma)}(|\Phi_{-}\rangle\!\langle\Phi_{-}|
+|\Psi_{-}\rangle\!\langle\Psi_{-}|).
\end{array}
\end{equation}
Thus, the optimality of the measurement $\{M_{i}\}_{i\in\overline{\Lambda}}$ in Eq.~\eqref{eq:oudm1} and the definition of $p_{\rm G}(\mathcal{E})$ lead us to
\begin{equation}\label{eq:qgee0}
\begin{array}{c}
p_{\rm G}(\mathcal{E})=\mathrm{Tr}K=\frac{\gamma}{2(1+\gamma)}=\eta_{0}.
\end{array}
\end{equation}
\indent 
In order to obtain the maximum success probability $p_{\rm L}(\mathcal{E})$ defined in Eq.~\eqref{eq:qlelocc} of the ``\hyperref[mtdsec]{Methods}'' Section, we consider lower and upper bounds of $p_{\rm L}(\mathcal{E})$.
A lower bound of $p_{\rm L}(\mathcal{E})$ can be obtained from the following unambiguous measurement $\{M_{i}\}_{i\in\overline{\Lambda}}$,
\begin{equation}\label{eq:loccud2}
\begin{array}{lcl}
M_{0}=0,\ M_{+}=|1\rangle\!\langle1|\otimes|+\rangle\!\langle+|,\\[1mm]
M_{1}=0,\ M_{-}=|1\rangle\!\langle1|\otimes|-\rangle\!\langle-|,\\[1mm]
M_{?}=|0\rangle\!\langle0|\otimes(|+\rangle\!\langle+|+|-\rangle\!\langle-|),
\end{array}
\end{equation}
which can be implemented by finite-round LOCC because it can be realized by performing local measurements $\{|0\rangle\!\langle0|,|1\rangle\!\langle1|\}$ and $\{|+\rangle\!\langle+|,|-\rangle\!\langle-|\}$ 
on first and second subsystems, respectively.
As we can easily verify that the success probability for the unambiguous LOCC measurement in Eq.~\eqref{eq:loccud2} is $\frac{1}{2(1+\gamma)}$, the success probability is obviously a lower bound of $p_{\rm L}(\mathcal{E})$,
\begin{equation}\label{eq:qlelwb}
\begin{array}{c}
p_{\rm L}(\mathcal{E})\geqslant
\frac{1}{2(1+\gamma)}=\eta_{+}.
\end{array}
\end{equation}
\indent To obtain an upper bound of $p_{\rm L}(\mathcal{E})$, 
let us consider a positive-semidefinite operator
\begin{equation}
\begin{array}{c}
H=\frac{1}{4(1+\gamma)}|1\rangle\!\langle1|\otimes(|+\rangle\!\langle+|+|-\rangle\!\langle-|)
\end{array}
\end{equation}
with
\begin{equation}
\begin{array}{c}
\langle\tilde{\varphi}_{+}|H|\tilde{\varphi}_{+}\rangle
=\eta_{+}=\eta_{-}=\langle\tilde{\varphi}_{-}|H|\tilde{\varphi}_{-}\rangle.
\end{array}
\end{equation}
Lemma~\ref{lem:qletrh} in the ``\hyperref[mtdsec]{Methods}'' Section leads us to
\begin{equation}\label{eq:qleupb}
\begin{array}{c}
p_{\rm L}(\mathcal{E})\leqslant\mathrm{Tr}H=\frac{1}{2(1+\gamma)}=\eta_{+}.
\end{array}
\end{equation}
Inequalities \eqref{eq:qlelwb} and \eqref{eq:qleupb} imply
\begin{equation}\label{eq:qleep}
p_{\rm L}(\mathcal{E})=\eta_{+}.
\end{equation}
\indent From Eqs.~\eqref{eq:qgee0} and \eqref{eq:qleep}, we note that there exists
a nonzero gap between
$p_{\rm G}(\mathcal{E})$ and $p_{\rm L}(\mathcal{E})$,
\begin{equation}\label{eq:plleqpg}
p_{\rm L}(\mathcal{E})=\eta_{+}<\eta_{0}=p_{\rm G}(\mathcal{E}),
\end{equation}
thus NLWE occurs in terms of optimal unambiguous discrimination in discriminating the states of the ensemble $\mathcal{E}$ in Example~\ref{ex:lock}.\\
\indent Now, we show that the availability of post-measurement information about the prepared subensemble vanishes the occurrence of NLWE in Inequality~\eqref{eq:plleqpg}. 
To show it, we use the fact that the states of $\mathcal{E}$ in Example~\ref{ex:lock} can be unambiguously discriminated without inconclusive results using LOCC when the post-measurement information about the prepared subensemble is available\cite{ha20213}, or equivalently,
\begin{equation}
p_{\rm L}^{\rm PI}(\mathcal{E})\geqslant1.
\end{equation}
From the definitions of $p_{\rm L}^{\rm PI}(\mathcal{E})$ and $p_{\rm G}^{\rm PI}(\mathcal{E})$, we note that 
\begin{equation}
p_{\rm G}^{\rm PI}(\mathcal{E})\geqslant
p_{\rm L}^{\rm PI}(\mathcal{E}).
\end{equation}
As both $p_{\rm G}^{\rm PI}(\mathcal{E})$ and $p_{\rm L}^{\rm PI}(\mathcal{E})$ are bound above by 1, we have
\begin{equation}\label{eq:pleqpg}
p_{\rm L}^{\rm PI}(\mathcal{E})=p_{\rm G}^{\rm PI}(\mathcal{E})=1.
\end{equation}
Thus, NLWE does not occur in terms of optimal unambiguous discrimination in discriminating the states of the ensemble $\mathcal{E}$ in Example~\ref{ex:lock} when the post-measurement information about the prepared subensemble is available.\\
\begin{figure}[!t]
\centerline{\includegraphics*[bb=20 20 430 310,scale=0.8]{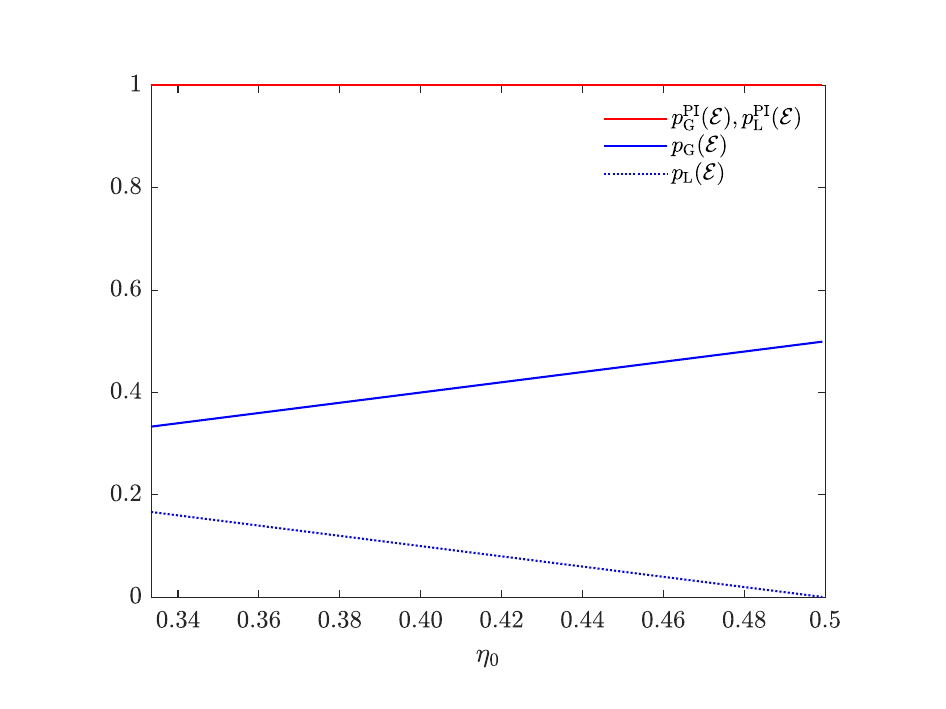}}
\caption{{\bf Locking NLWE by post-measurement information in terms of optimal unambiguous discrimination.} For all $\eta_{0}\in[\frac{1}{3},\frac{1}{2})$, 
$p_{\rm L}(\mathcal{E})$(dashed blue) is less than
$p_{\rm G}(\mathcal{E})$(solid blue), 
but $p_{\rm L}^{\rm PI}(\mathcal{E})$(red)
is equal to $p_{\rm G}^{\rm PI}(\mathcal{E})$(red).}\label{fig:loud}
\end{figure}
\indent Inequality \eqref{eq:plleqpg}
shows that NLWE occurs in terms of optimal unambiguous discrimination about the ensemble $\mathcal{E}$ in Example~\ref{ex:lock}, whereas Eq.~\eqref{eq:pleqpg} shows that NLWE does not occur when post-measurement information is available.
Figure~\ref{fig:loud} illustrates the relative order of $p_{\rm G}(\mathcal{E})$, $p_{\rm L}(\mathcal{E})$, $p_{\rm G}^{\rm PI}(\mathcal{E})$, and $p_{\rm L}^{\rm PI}(\mathcal{E})$ for the range of $\frac{1}{3}\leqslant\eta_{0}<\frac{1}{2}$.
\begin{theorem}\label{thm:loud}
For optimal unambiguous discrimination of the ensemble $\mathcal{E}$ in Example~\ref{ex:lock},
the post-measurement information about the prepared subensemble locks NLWE.
\end{theorem}
\subsection*{Unlocking NLWE by post-measurement information in optimal unambiguous discrimination}
In this section, we consider the opposite situation to the previous section; the post-measurement information about the prepared subensemble $\mathcal{E}_{b}$ in Eq.~\eqref{eq:suben} \emph{unlocks} NLWE.
After providing an example of a state ensemble $\mathcal{E}$, we first show that NLWE in terms of optimal unambiguous discrimination does not occur in discriminating the states of the ensemble. With the same ensemble, we further show the occurrence of NLWE in terms of optimal unambiguous discrimination in the state discrimination with the help of post-measurement information, thus unlocking NLWE by post-measurement information.
\begin{example}[\cite{ha20213}]\label{ex:unlock}
Let us consider the ensemble $\mathcal{E}$ in Eq.~\eqref{eq:ense} with
\begin{equation}\label{eq:ftqs01}
\begin{array}{lcllcllcl}
\eta_{0}&=&\frac{\gamma}{2(1+\gamma)},&
\rho_{0}&=&|\varphi_{0}\rangle\!\langle\varphi_{0}|,&
|\varphi_{0}\rangle&=&
|0\rangle\otimes|0\rangle,\\[1mm]
\eta_{1}&=&\frac{\gamma}{2(1+\gamma)},&
\rho_{1}&=&|\varphi_{1}\rangle\!\langle\varphi_{1}|,&
|\varphi_{1}\rangle&=&
|0\rangle\otimes|1\rangle,\\[1mm]
\eta_{+}&=&\frac{1}{2(1+\gamma)},& 
\rho_{+}&=&|\varphi_{+}\rangle\!\langle\varphi_{+}|,&
|\varphi_{+}\rangle&=&
|+\rangle\otimes|+\rangle,\\[1mm]
\eta_{-}&=&\frac{1}{2(1+\gamma)},& 
\rho_{-}&=&|\varphi_{-}\rangle\!\langle\varphi_{-}|,&
|\varphi_{-}\rangle&=&
|+\rangle\otimes|-\rangle,
\end{array}
\end{equation}
where $2\leqslant\gamma<\infty$.
In this case, the subensembles in Eq.~\eqref{eq:suben} become
\begin{equation}
\begin{array}{l}
\mathcal{E}_{0}=\{\frac{1}{2},|0\rangle\!\langle0|\!\otimes\!|0\rangle\!\langle0|,\ 
\frac{1}{2},|0\rangle\!\langle0|\!\otimes\!|1\rangle\!\langle1|\},\\[1mm]
\mathcal{E}_{1}=\{\frac{1}{2},|+\rangle\!\langle+|\!\otimes\!|+\rangle\!\langle+|,\ 
\frac{1}{2},|+\rangle\!\langle+|\!\otimes\!|-\rangle\!\langle-|\},
\end{array}
\end{equation}
with the probabilities of preparation  
$\frac{\gamma}{1+\gamma}$ and $\frac{1}{1+\gamma}$, respectively.
\end{example}
\indent To show the non-occurrence of NLWE in terms of optimal unambiguous discrimination about the ensemble $\mathcal{E}$ in Example~\ref{ex:unlock}, we first evaluate the optimal success probability $p_{\rm G}(\mathcal{E})$ defined in Eq.~\eqref{eq:pgeoud} of the ``\hyperref[mtdsec]{Methods}'' Section.
Since the reciprocal vectors $\{|\tilde{\varphi}_{i}\rangle\}_{i\in\Lambda}$ corresponding to
$\{|\varphi_{i}\rangle\}_{i\in\Lambda}$ defined in Eq.~\eqref{eq:ftqs01} are 
\begin{equation}\label{eq:tpsiv}
\begin{array}{l}
|\tilde{\varphi}_{0}\rangle=\sqrt{2}|-\rangle\otimes|0\rangle,\ 
|\tilde{\varphi}_{+}\rangle=\sqrt{2}|1\rangle\otimes|+\rangle,\\[1mm]
|\tilde{\varphi}_{1}\rangle=\sqrt{2}|-\rangle\otimes|1\rangle,\ 
|\tilde{\varphi}_{-}\rangle=\sqrt{2}|1\rangle\otimes|-\rangle,
\end{array}
\end{equation}
the following measurement $\{M_{i}\}_{i\in\overline{\Lambda}}$ satisfies the error-free condition in Eq.~\eqref{eq:udm}, 
\begin{equation}\label{eq:oudm}
\begin{array}{lcl}
M_{0}=|-\rangle\!\langle-|\otimes|0\rangle\!\langle0|,\
M_{+}=0,\\[1mm]
M_{1}=|-\rangle\!\langle-|\otimes|1\rangle\!\langle1|,\
M_{-}=0,\\[1mm]
M_{?}=|+\rangle\!\langle+|\otimes(|0\rangle\!\langle0|+|1\rangle\!\langle1|).
\end{array}
\end{equation}
Moreover, the unambiguous measurement is optimal because Condition~\eqref{eq:kkt} holds for
this unambiguous measurement along with the following positive-semidefinite operator
\begin{equation}
\begin{array}{c}
K=\frac{\gamma}{4(1+\gamma)}|-\rangle\!\langle-|\otimes(|0\rangle\!\langle0|+|1\rangle\!\langle1|).
\end{array}
\end{equation}
Thus, the optimality of the measurement $\{M_{i}\}_{i\in\overline{\Lambda}}$ in Eq.~\eqref{eq:oudm} and the definition of $p_{\rm G}(\mathcal{E})$ lead us to
\begin{equation}
\begin{array}{c}
p_{\rm G}(\mathcal{E})=\mathrm{Tr}K=\frac{\gamma}{2(1+\gamma)}=\eta_{0}.
\end{array}
\end{equation}
\indent The measurement given in Eq.~\eqref{eq:oudm} can be performed using finite-round LOCC;
two local measurements
$\{|+\rangle\!\langle+|,|-\rangle\!\langle-|\}$ and 
$\{|0\rangle\!\langle0|,|1\rangle\!\langle1|\}$
are performed on first and second subsystems, respectively.
Thus, the success probability for the unambiguous LOCC measurement in Eq.~\eqref{eq:oudm} is a lower bound of $p_{\rm L}(\mathcal{E})$ defined in Eq.~\eqref{eq:qlelocc}, therefore
\begin{equation}\label{eq:qlee0}
\begin{array}{c}
p_{\rm L}(\mathcal{E})\geqslant\eta_{0},
\end{array}
\end{equation}
Moreover, from the definition of $p_{\rm G}(\mathcal{E})$ and $p_{\rm L}(\mathcal{E})$ in Eqs.~\eqref{eq:pgeoud} and \eqref{eq:qlelocc}, respectively, we have
\begin{equation}\label{eq:pgeple}
p_{\rm G}(\mathcal{E})\geqslant p_{\rm L}(\mathcal{E}).
\end{equation}
Inequalities \eqref{eq:qlee0} and \eqref{eq:pgeple} lead us to
\begin{equation}\label{eq:eqplpg}
p_{\rm L}(\mathcal{E})=p_{\rm G}(\mathcal{E})=\eta_{0}.
\end{equation}
Thus, NLWE does not occur in terms of optimal unambiguous discrimination in discriminating the states of the ensemble $\mathcal{E}$ in Example~\ref{ex:unlock}.\\
\indent Now, we show that NLWE in terms of optimal unambiguous discrimination occurs when the post-measurement information about the prepared subensemble is available. 
To show it, we use the fact that the states of $\mathcal{E}$ in Example~\ref{ex:unlock} can be unambiguously discriminated without inconclusive results when the post-measurement information about the prepared subensemble is available\cite{ha20213}, or equivalently,
\begin{equation}
p_{\rm G}^{\rm PI}(\mathcal{E})\geqslant1.
\end{equation}
As $p_{\rm G}^{\rm PI}(\mathcal{E})$ is bound above by 1, we have
\begin{equation}\label{eq:qgpie1}
p_{\rm G}^{\rm PI}(\mathcal{E})=1.
\end{equation}
\indent To obtain the maximum success probability $p_{\rm L}^{\rm PI}(\mathcal{E})$ in Eq.~\eqref{eq:qlpie} of the ``\hyperref[mtdsec]{Methods}'' Section, we consider lower and upper bounds of $p_{\rm L}^{\rm PI}(\mathcal{E})$.
For a lower bound of $p_{\rm L}^{\rm PI}(\mathcal{E})$, let us first consider 
the following measurement $\{M_{\vec{\omega}}\}_{\vec{\omega}\in\Omega}$,
\begin{equation}\label{eq:fmem}
\begin{array}{l}
M_{(0,?)}=|\nu_{-}\rangle\!\langle\nu_{-}|\otimes|0\rangle\!\langle0|,\
M_{(?,+)}=|\nu_{+}\rangle\!\langle\nu_{+}|\otimes|+\rangle\!\langle+|,\\[1mm]
M_{(1,?)}=|\nu_{-}\rangle\!\langle\nu_{-}|\otimes|1\rangle\!\langle1|,\
M_{(?,-)}=|\nu_{+}\rangle\!\langle\nu_{+}|\otimes|-\rangle\!\langle-|,\\[1mm]
M_{\vec{\omega}}=0\ \forall\vec{\omega}\in\{(0,+),(0,-),(1,+),(1,-),(?,?)\},\\
\end{array}
\end{equation}
where
\begin{equation}\label{eq:nupm}
\begin{array}{c}
|\nu_{\pm}\rangle=\sqrt{\frac{1}{2}\mp\frac{\gamma}{2\sqrt{1+\gamma^{2}}}}|0\rangle\pm\sqrt{\frac{1}{2}\pm\frac{\gamma}{2\sqrt{1+\gamma^{2}}}}|1\rangle.
\end{array}
\end{equation}
\indent The measurement given in Eq.~\eqref{eq:fmem} 
is unambiguous because it satisfies the error-free condition in Eq.~\eqref{eq:udmpi}.
Moreover, this measurement can be performed with finite-round LOCC;
we first measure $\{|\nu_{+}\rangle\!\langle\nu_{+}|,|\nu_{-}\rangle\!\langle\nu_{-}|\}$ on first subsystem, and then measure $\{|+\rangle\!\langle+|,|-\rangle\!\langle-|\}$ or $\{|0\rangle\!\langle0|,|1\rangle\!\langle1|\}$ on second subsystem depending on the first measurement result $|\nu_{+}\rangle\!\langle\nu_{+}|$ or $|\nu_{-}\rangle\!\langle\nu_{-}|$. 
As we can verify from a straightforward calculation that the success probability for the unambiguous LOCC measurement in Eq.~\eqref{eq:fmem} is 
\begin{equation}
\sum_{b\in\{0,1\}}\sum_{i\in\mathsf{A}_{b}}
\eta_{i}\mathrm{Tr}\Big[\rho_{i} \sum_{\substack{\vec{\omega}\in\Omega\\\omega_{b}=i}} M_{\vec{\omega}}\Big]=
\sum_{i\in\mathsf{A}_{0}}\eta_{i}\mathrm{Tr}(\rho_{i}M_{(i,?)})+\sum_{j\in\mathsf{A}_{1}}\eta_{j}\mathrm{Tr}(\rho_{j}M_{(?,j)})=\frac{1}{2}\Big(1+\frac{\sqrt{1+\gamma^{2}}}{1+\gamma}\Big),
\end{equation}
thus the definition of $p_{\rm L}^{\rm PI}(\mathcal{E})$ lead us to
\begin{equation}\label{eq:qlpiegeq}
\begin{array}{c}
p_{\rm L}^{\rm PI}(\mathcal{E})\geqslant \frac{1}{2}\Big(1+\frac{\sqrt{1+\gamma^{2}}}{1+\gamma}\Big).
\end{array}
\end{equation}
We also note that the measurement in Eq.~\eqref{eq:fmem} yields $p_{\rm guess}(\mathcal{E})$ defined in Eq.~\eqref{eq:plelocc} of the ``\hyperref[mtdsec]{Methods}'' Section when considering $M_{0}=M_{(0,?)}$, $M_{1}=M_{(1,?)}$, $M_{+}=M_{(?,+)}$, and $M_{-}=M_{(?,-)}$\cite{ha20213}, that is,
\begin{equation}
\begin{array}{c}
p_{\rm guess}(\mathcal{E})=\frac{1}{2}\Big(1+\frac{\sqrt{1+\gamma^{2}}}{1+\gamma}\Big).
\end{array}
\end{equation}
\indent In order to obtain an upper bound of $p_{\rm L}^{\rm PI}(\mathcal{E})$, let us consider
the assumption of Lemma~\ref{lem:qlpieple} in the ``\hyperref[mtdsec]{Methods}'' Section.
For each $(\omega_{0},\omega_{1})\in\mathsf{A}_{0}\times\mathsf{A}_{1}$, there does not exist any nonzero product vector
$|v\rangle=|a\rangle\otimes|b\rangle$ satisfying Condition~\eqref{eq:npvc}; otherwise, $|a\rangle$ is not orthogonal to both $|0\rangle$ and $|+\rangle$. At the same time, $|b\rangle$ is orthogonal to the $|k\rangle$'s with $k\in\Lambda\setminus\{\omega_{0},\omega_{1}\}$, which leads us a contradiction.
Thus, the guessing probability of $\mathcal{E}$ is also an upper bound of $p_{\rm L}^{\rm PI}(\mathcal{E})$ due to Lemma~\ref{lem:qlpieple} in the ``\hyperref[mtdsec]{Methods}'' Section, that is,
\begin{equation}\label{eq:qlpiepge}
\begin{array}{c}
p_{\rm L}^{\rm PI}(\mathcal{E})\leqslant p_{\rm guess}(\mathcal{E})=\frac{1}{2}\Big(1+\frac{\sqrt{1+\gamma^{2}}}{1+\gamma}\Big).
\end{array}
\end{equation}
Inequalities \eqref{eq:qlpiegeq} and \eqref{eq:qlpiepge} imply
\begin{equation}\label{eq:gappi}
\begin{array}{c}
p_{\rm L}^{\rm PI}(\mathcal{E})=\frac{1}{2}\Big(1+\frac{\sqrt{1+\gamma^{2}}}{1+\gamma}\Big).
\end{array}
\end{equation}
\indent From Eqs.~\eqref{eq:qgpie1} and \eqref{eq:gappi},
we note that there exists a nonzero gap between $p_{\rm G}^{\rm PI}(\mathcal{E})$ and $p_{\rm L}^{\rm PI}(\mathcal{E})$,
\begin{equation}\label{eq:plpinl}
\begin{array}{c}
p_{\rm L}^{\rm PI}(\mathcal{E})
=\frac{1}{2}\Big(1+\frac{\sqrt{1+\gamma^{2}}}{1+\gamma}\Big)
<1=p_{\rm G}^{\rm PI}(\mathcal{E}).
\end{array}
\end{equation}
Thus, NLWE occurs in terms of optimal unambiguous discrimination when the post-measurement information about the prepared subensemble is available.\\
\begin{figure}[!t]
\centerline{\includegraphics*[bb=20 20 430 310,scale=0.8]{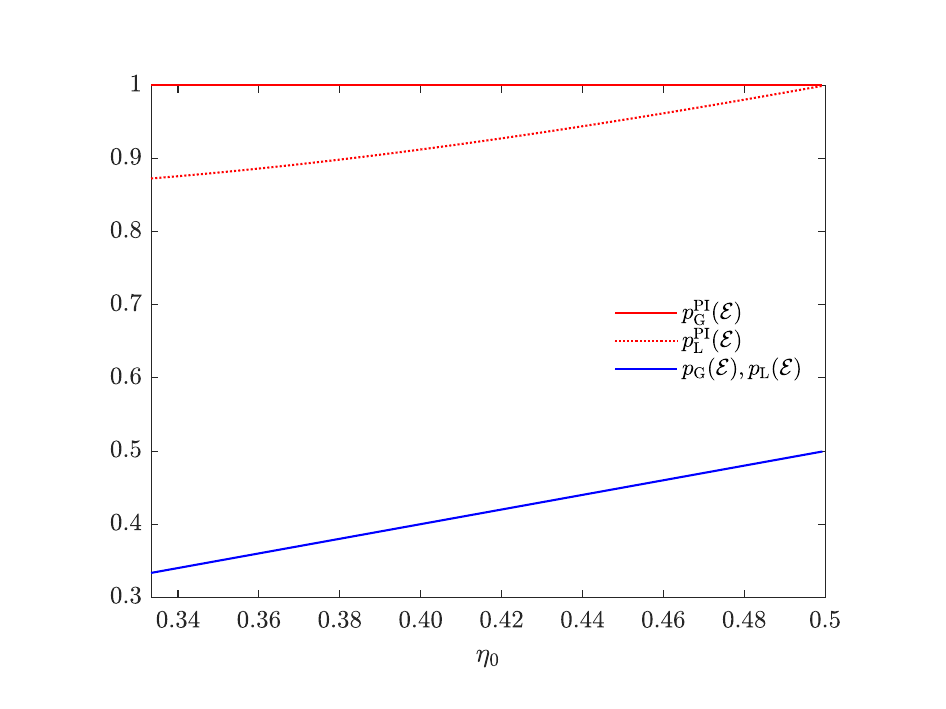}}
\caption{{\bf Unlocking NLWE by post-measurement information in terms of optimal unambiguous discrimination.} For all $\eta_{0}\in[\frac{1}{3},\frac{1}{2})$, 
$p_{\rm L}(\mathcal{E})$(blue) is equal to 
$p_{\rm G}(\mathcal{E})$(blue), 
but $p_{\rm L}^{\rm PI}(\mathcal{E})$(dashed red)
is less than $p_{\rm G}^{\rm PI}(\mathcal{E})$(solid red).}\label{fig:uloud}
\end{figure}
\indent Equation \eqref{eq:eqplpg} shows that NLWE in terms of optimal unambiguous discrimination does not occur  in discriminating the states of the ensemble $\mathcal{E}$ in Example~\ref{ex:unlock}, whereas Inequality \eqref{eq:plpinl} shows that NLWE occurs when post-measurement information is available.
Figure~\ref{fig:uloud} illustrates the relative order of $p_{\rm G}(\mathcal{E})$, $p_{\rm L}(\mathcal{E})$, $p_{\rm G}^{\rm PI}(\mathcal{E})$, and $p_{\rm L}^{\rm PI}(\mathcal{E})$ for the range of $\frac{1}{3}\leqslant\eta_{0}<\frac{1}{2}$.
\begin{theorem}\label{thm:uloud}
For optimal unambiguous discrimination of the ensemble $\mathcal{E}$ in Example~\ref{ex:unlock},
the post-measurement information about the prepared subensemble unlocks NLWE.
\end{theorem}
\section*{DISCUSSION}
We have shown that the post-measurement information about the prepared subensemble can lock or unlock NLWE in terms of optimal unambiguous discrimination. We have provided a quantum state ensemble consisting of four $2\otimes2$ non-entangled pure states (Example~\ref{ex:lock}) and shown 
the occurrence of NLWE in terms of optimal unambiguous discrimination with respect to the ensemble.
With the same state ensemble, we have further shown that the availability of post-measurement information about the prepared subensemble vanishes the occurrence of NLWE, thus locking NLWE in terms of optimal unambiguous discrimination by post-measurement information (Theorem~\ref{thm:loud}). Moreover, we have provided another quantum state ensemble consisting of four $2\otimes2$ non-entangled pure states (Example~\ref{ex:unlock}) and shown the non-occurrence of NLWE in terms of optimal unambiguous discrimination
with respect to the ensemble. With the same state ensemble, we have further shown the occurrence of NLWE in the optimal unambiguous discrimination with the post-measurement information about the prepared subensemble, thus unlocking NLWE in terms of optimal unambiguous discrimination by post-measurement information (Theorem~\ref{thm:uloud}).\\
\indent We remark that the two state ensembles of this paper can also be used to demonstrate locking and unlocking NLWE in terms of minimum-error discrimination\cite{ha20213}.
Thus, it is a natural future work to investigate locking and unlocking NLWE even in generalized state discrimination strategies such as an optimal discrimination with a fixed rate of inconclusive results\cite{chef1998,zhan1999,fiur2003,baga2012,herz2015}.\\
\indent Our results can also provide us with a useful application in quantum cryptography. Whereas the existing quantum data hiding and secret sharing schemes are based on orthogonal states\cite{terh20011,divi2002,egge2002,raha2015,wang20171},
our results can extend those schemes to improved ones using non-orthogonal states.
In Example~\ref{ex:lock}, the availability of the post-measurement information about the prepared subensemble makes the globally hidden information accessible locally.
On the other hand, in Example~\ref{ex:unlock}, the post-measurement information makes 
locally accessible information hidden locally but accessible globally. Finally, it is an interesting task to investigate if locking or unlocking NLWE
by the post-measurement information about the prepared subensemble
can depend on nonzero prior probabilities.\\
\section*{METHODS}\label{mtdsec}
In two-qubit (or $2\otimes2$) systems, a state
and a measurement are expressed by
a density operator and a positive operator-valued measure(POVM), respectively, acting on a two-party complex Hilbert space $\mathbb{C}^{2}\otimes\mathbb{C}^{2}$.
A density operator $\rho$ is a positive-semidefinite operator $\rho\succeq0$ with unit trace $\mathrm{Tr}\rho=1$ and a POVM $\{M_{i}\}_{i}$ is a set of positive-semidefinite operators $M_{i}\succeq0$ satisfying $\sum_{i}M_{i}=\mathbbm{1}$, where
$\mathbbm{1}$ is the identity operator on $\mathbb{C}^{2}\otimes\mathbb{C}^{2}$.
The probability of obtaining
the measurement outcome corresponding to $M_{i}$
is $\mathrm{Tr}(\rho M_{i})$ when $\{M_{i}\}_{i}$ is performed on a quantum system prepared with $\rho$.\\
\indent A positive-semidefinite operator is called \emph{separable}(or \emph{non-entangled}) if it is a sum of positive-semidefinite product operators;
otherwise, it is said to be \emph{entangled}.
Also, a POVM is called \emph{separable} if all elements are separable. In particular, a \emph{LOCC measurement} that can be realized by LOCC is a separable measurement\cite{chit20142}.
\subsection*{Optimal unambiguous discrimination}
Let us consider the unambiguous discrimination of the states in $\mathcal{E}$ of Eq.~\eqref{eq:ense} using a measurement $\{M_{i}\}_{i\in\overline{\Lambda}}$, 
where 
\begin{equation}
\overline{\Lambda}=\Lambda\cup\{?\}=\{0,1,+,-,?\}.
\end{equation}
For each $i\in\Lambda$, $M_{i}$ is to detect $\rho_{i}$,
and $M_{?}$ gives inconclusive results: ``I don't know what state is prepared.'' The measurement $\{M_{i}\}_{i\in\overline{\Lambda}}$ can be expressed as
\begin{equation}\label{eq:udm}
\begin{array}{c}
M_{i}=s_{i}|\tilde{\varphi}_{i}\rangle\!\langle\tilde{\varphi}_{i}|\ \ \forall i\in\Lambda,\
M_{?}=\mathbbm{1}-\sum_{j\in\Lambda}s_{j}|\tilde{\varphi}_{j}\rangle\!\langle\tilde{\varphi}_{j}|,
\end{array}
\end{equation}
where $\{s_{i}\}_{i\in\Lambda}$ is a non-negative number set and $\{|\tilde{\varphi}_{i}\rangle\}_{i\in\Lambda}$ is the set of reciprocal vectors
corresponding to $\{|\varphi_{i}\rangle\}_{i\in\Lambda}$ in Eq.~\eqref{eq:varphis} such that $\langle\varphi_{i}|\tilde{\varphi}_{j}\rangle=\delta_{ij}$\cite{pang2009}. 
We say a POVM $\{M_{i}\}_{i\in\overline{\Lambda}}$ is \emph{unambiguous} if it satisfies the error-free condition in Eq.~\eqref{eq:udm}.\\
\indent The \emph{optimal unambiguous discrimination of $\mathcal{E}$} is 
to minimize the probability of obtaining inconclusive results.
Equivalently, the optimal unambiguous discrimination of $\mathcal{E}$ is to maximize the average probability of 
unambiguously discriminating states in $\mathcal{E}$;
\begin{equation}\label{eq:pgeoud}
p_{\rm G}(\mathcal{E})=\max_{\rm Eq.\eqref{eq:udm}}
\sum_{i\in\Lambda}\eta_{i}\mathrm{Tr}(\rho_{i}M_{i})
\end{equation}
where the maximum is taken over all possible unambiguous measurements
satisfying the error-free condition in Eq.~\eqref{eq:udm}. 
It is known that an unambiguous measurement $\{M_{i}\}_{i\in\overline{\Lambda}}$
is optimal if and only if there is a positive-semidefinite operator $K$ satisfying the following condition\cite{elda20031,elda2004,naka20151,ha20212},
\begin{equation}\label{eq:kkt}
\begin{array}{c}
\langle\tilde{\varphi}_{i}|K|\tilde{\varphi}_{i}\rangle\geqslant\eta_{i}\ \forall i\in\Lambda,\
\mathrm{Tr}[M_{i}(K-\eta_{i}\rho_{i})]=0\ \forall i\in\Lambda,\
\mathrm{Tr}(M_{?}K)=0.
\end{array}
\end{equation}
In this case, we have
\begin{equation}
p_{\rm G}(\mathcal{E})=\sum_{i\in\Lambda}\eta_{i}\mathrm{Tr}(\rho_{i}M_{i})={\rm Tr}K
\end{equation}
if an unambiguous measurement $\{M_{i}\}_{i\in\overline{\Lambda}}$ and
a positive-semidefinite operator $K$ satisfy Condition \eqref{eq:kkt}\cite{elda20031,elda2004,naka20151,ha20212}.\\
\indent When the available measurements are restricted to unambiguous LOCC measurements, we denote the maximum success probability by
\begin{equation}\label{eq:qlelocc}
p_{\rm L}(\mathcal{E})=\max_{\substack{\rm Eq.\eqref{eq:udm}\\{\rm LOCC}}}\sum_{i\in\Lambda}\eta_{i}\mathrm{Tr}(\rho_{i}M_{i}).
\end{equation}
\indent In the following lemma, we provide an upper bound of $p_{\rm L}(\mathcal{E})$. 
\begin{lemma}\label{lem:qletrh}
If $H$ is a positive-semidefinite operator satisfying
\begin{equation}\label{eq:pth}
\langle\tilde{\varphi}_{i}|H|\tilde{\varphi}_{i}\rangle\geqslant \eta_{i}
\end{equation}
for all reciprocal vectors $|\tilde{\varphi}_{i}\rangle$ that is a product vector, then ${\rm Tr}H$ is an upper bound of $p_{\rm L}(\mathcal{E})$.
\end{lemma}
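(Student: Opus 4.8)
The plan is to show that every unambiguous LOCC POVM attains success probability at most $\mathrm{Tr}H$, after which taking the maximum in the definition~\eqref{eq:qlelocc} finishes the proof. So first I would fix an arbitrary unambiguous LOCC measurement $\{M_{i}\}_{i\in\overline{\Lambda}}$. By the error-free condition in Eq.~\eqref{eq:udm}, each detection element has the form $M_{i}=s_{i}|\tilde{\varphi}_{i}\rangle\!\langle\tilde{\varphi}_{i}|$ with $s_{i}\geqslant0$, and since the reciprocal relation gives $\langle\varphi_{i}|\tilde{\varphi}_{i}\rangle=1$, the success probability is
\begin{equation}
\sum_{i\in\Lambda}\eta_{i}\mathrm{Tr}(\rho_{i}M_{i})=\sum_{i\in\Lambda}\eta_{i}s_{i}.
\end{equation}

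The crucial point is the separability constraint. Since an LOCC measurement is separable, each $M_{i}$ is a separable positive-semidefinite operator. For any index $i$ with $s_{i}>0$, $M_{i}$ is a positive multiple of $|\tilde{\varphi}_{i}\rangle\!\langle\tilde{\varphi}_{i}|$, which is rank one; and a rank-one positive-semidefinite operator is separable only if the vector spanning its range is a product vector, so $|\tilde{\varphi}_{i}\rangle$ must be a product vector. Hence the hypothesis~\eqref{eq:pth} is applicable exactly for those $i$, yielding $\eta_{i}\leqslant\langle\tilde{\varphi}_{i}|H|\tilde{\varphi}_{i}\rangle$ and therefore
\begin{equation}
\eta_{i}s_{i}\leqslant s_{i}\langle\tilde{\varphi}_{i}|H|\tilde{\varphi}_{i}\rangle=\mathrm{Tr}(HM_{i}).
\end{equation}
For $i$ with $s_{i}=0$ this same inequality holds trivially, since then $M_{i}=0$.

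Summing over $i\in\Lambda$ and using $H\succeq0$ together with $M_{?}\succeq0$ and $\sum_{i\in\overline{\Lambda}}M_{i}=\mathbbm{1}$, I would obtain
\begin{equation}
\sum_{i\in\Lambda}\eta_{i}\mathrm{Tr}(\rho_{i}M_{i})=\sum_{i\in\Lambda}\eta_{i}s_{i}\leqslant\sum_{i\in\Lambda}\mathrm{Tr}(HM_{i})\leqslant\mathrm{Tr}\Big(H\sum_{i\in\overline{\Lambda}}M_{i}\Big)=\mathrm{Tr}H.
\end{equation}
Since this bound holds for every unambiguous LOCC POVM, the maximum in Eq.~\eqref{eq:qlelocc} is at most $\mathrm{Tr}H$, that is, $p_{\rm L}(\mathcal{E})\leqslant\mathrm{Tr}H$.

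The only genuinely nontrivial step is the separability argument: an LOCC unambiguous measurement of a product basis can place nonzero weight only on those detection outcomes whose reciprocal vector is a product vector. This is precisely why the hypothesis is imposed only on the product reciprocal vectors — the entangled reciprocal vectors never contribute to an LOCC strategy — and everything else reduces to the short trace manipulations above.
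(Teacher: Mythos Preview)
Your proof is correct and follows essentially the same approach as the paper: both argue that an LOCC (hence separable) unambiguous POVM forces $M_{i}=0$ whenever $|\tilde{\varphi}_{i}\rangle$ is entangled, then apply the hypothesis~\eqref{eq:pth} on the remaining product reciprocal vectors and bound by $\mathrm{Tr}H$ using $H\succeq0$, $M_{?}\succeq0$, and completeness. The only difference is cosmetic---the paper packages the argument via the index set $\chi=\{i:|\tilde{\varphi}_{i}\rangle\text{ is product}\}$ and a single inequality chain, whereas you bound each term $\eta_{i}s_{i}\leqslant\mathrm{Tr}(HM_{i})$ and then sum.
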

\begin{proof}
Let us suppose that $\{M_{i}\}_{i\in\overline{\Lambda}}$ is an unambiguous  LOCC measurement and $\chi$ is the set of all $i\in\Lambda$ such that $|\tilde{\varphi}_{i}\rangle$ is a product vector.
Since every LOCC measurement is separable,
$M_{i}$ is separable for all $i\in\overline{\Lambda}$.
For all $i\in\Lambda$ with $i\notin\chi$, $M_{i}=0$ 
because $M_{i}$ is proportional to entangled $|\tilde{\varphi}_{i}\rangle\!\langle\tilde{\varphi}_{i}|$.
Thus, the success probability is
\begin{equation}\label{eq:splem}
\begin{array}{c}
\sum_{i\in\chi}\eta_{i}{\rm Tr}(\rho_{i}M_{i})
\leqslant\sum_{i\in\chi}\eta_{i}{\rm Tr}(\rho_{i}M_{i})
+\sum_{i\in\chi}{\rm Tr}[(H-\eta_{i}\rho_{i})M_{i}]+{\rm Tr}(HM_{?})
={\rm Tr}H,
\end{array}
\end{equation}
where the inequality is due to the assumption of Inequality~\eqref{eq:pth} and the positive-semidefiniteness of $H$ and $M_{?}$, and
the equality is from  $M_{?}=\mathbbm{1}-\sum_{i\in\chi}M_{i}$.
As Inequality~\eqref{eq:splem} is true for any unambiguous LOCC measurement $\{M_{i}\}_{i\in\overline{\Lambda}}$, 
${\rm Tr}H$ is an upper bound of $p_{\rm L}(\mathcal{E})$.
\end{proof}
\subsection*{Optimal unambiguous discrimination with post-measurement information}
\indent Let us consider 
the situation of unambiguously discriminating the two-qubit states of $\mathcal{E}$ in Eq.~\eqref{eq:ense} when the classical information $b\in\{0,1\}$ about the prepared subensemble $\mathcal{E}_{b}$ defined in Eq.~\eqref{eq:suben}
is given after performing a measurement.
We use a POVM $\{M_{\vec{\omega}}\}_{\vec{\omega}\in\Omega}$
to unambiguously discriminate the states of $\mathcal{E}$ in Eq.~\eqref{eq:ense}, where $\Omega$ is the Cartesian product of two outcome sets $\mathsf{A}_{0}\cup\{?\}$ and $\mathsf{A}_{1}\cup\{?\}$ with inconclusive results,
\begin{equation}
\begin{array}{rcl}
\Omega&=&\{(\omega_{0},\omega_{1})\,|\,\omega_{0}\in\mathsf{A}_{0}\cup\{?\},\,
\omega_{1}\in\mathsf{A}_{1}\cup\{?\}\}\\[1mm]
&=&\{(0,+),(0,-),(1,+),(1,-),(0,?),(1,?),(?,+),(?,-),(?,?)\}.
\end{array}
\end{equation}
For each $(\omega_{0},\omega_{1})\in\Omega$,
$M_{(\omega_{0},\omega_{1})}$ detects a state in $\mathcal{E}$ unambiguously or gives inconclusive results 
depending on post-measurement information $b\in\{0,1\}$.
If $\omega_{b}\neq?$, the state $\rho_{\omega_{b}}$ is detected unambiguously, that is, the POVM $\{M_{\vec{\omega}}\}_{\vec{\omega}\in\Omega}$ satisfies
\begin{equation}\label{eq:udmpi}
\begin{array}{lll}
\mathrm{Tr}[\rho_{-}M_{(i,+)}]
=\mathrm{Tr}[\rho_{+}M_{(i,-)}]
=0& \forall i\,\in\mathsf{A}_{0}\cup\{?\},\\
\mathrm{Tr}[\rho_{\,1}M_{(0,\,j)}]
=\mathrm{Tr}[\rho_{\,0\,}M_{(1,\,j)}]
=0& \forall j\in\mathsf{A}_{1}\cup\{?\}.
\end{array}
\end{equation}
However, if $\omega_{b}=?$, inconclusive results are obtained.
We say that a POVM $\{M_{\vec{\omega}}\}_{\vec{\omega}\in\Omega}$ is \emph{unambiguous} if it satisfies the error-free condition in Eq.~\eqref{eq:udmpi}.\\
\indent The \emph{optimal unambiguous discrimination of $\mathcal{E}$ with post-measurement information} is
to minimize the probability of obtaining 
inconclusive results.
Equivalently, the optimal unambiguous discrimination of $\mathcal{E}$ with post-measurement information is to maximize the average probability of 
unambiguously discriminating states where the optimal success probability is defined as
\begin{equation}\label{eq:qgpie}
p_{\rm G}^{\rm PI}(\mathcal{E})=
\max_{\rm Eq.\eqref{eq:udmpi}}
\sum_{b\in\{0,1\}}\sum_{i\in\mathsf{A}_{b}}
\eta_{i}\mathrm{Tr}\Big[\rho_{i} \sum_{\substack{\vec{\omega}\in\Omega\\\omega_{b}=i}} M_{\vec{\omega}}\Big]
\end{equation}
over all possible unambiguous measurements in Eq.~\eqref{eq:udmpi}.\\
\indent Rather surprisingly, some non-orthogonal states can be perfectly discriminated when the post-measurement information about the prepared subensemble is available\cite{akib2019}, that is, $p_{\rm G}^{\rm PI}(\mathcal{E})=1$. More precisely, for a state ensemble $\mathcal{E}$ that
consists of two subensembles $\mathcal{E}_{0}$ and $\mathcal{E}_{1}$ with two pure states, $p_{\rm G}^{\rm PI}(\mathcal{E})=1$ if and only if
\begin{equation}\label{eq:pfdc}
(1-G_{0+})(1-G_{1-})+(1-G_{0-})(1-G_{1+})-2\sqrt{G_{0+}G_{0-}G_{1+}G_{1-}}\geqslant1,
\end{equation}
where
\begin{equation}
G_{ij}=\mathrm{Tr}(\rho_{i}\rho_{j}),~i\in\mathsf{A}_{0},~j\in\mathsf{A}_{1}.
\end{equation}
\indent When the available measurements are limited to unambiguous LOCC measurements,
we denote the maximum success probability by
\begin{eqnarray}
p_{\rm L}^{\rm PI}(\mathcal{E})&=&
\max_{\substack{{\rm Eq.\eqref{eq:udmpi}}\\{\rm LOCC}}}
\sum_{b\in\{0,1\}}\sum_{i\in\mathsf{A}_{b}}
\eta_{i}\mathrm{Tr}\Big[\rho_{i} \sum_{\substack{\vec{\omega}\in\Omega\\\omega_{b}=i}} M_{\vec{\omega}}\Big].\label{eq:qlpie}
\end{eqnarray}
We note that $p_{\rm L}^{\rm PI}(\mathcal{E})$ in Eq.~\eqref{eq:qlpie} can also be rewritten as
\begin{eqnarray}
p_{\rm L}^{\rm PI}(\mathcal{E})
=\max_{\substack{{\rm Eq.\eqref{eq:udmpi}}\\{\rm LOCC}}}\Bigg[\sum_{\vec{\omega}\in\mathsf{A}_{0}\times\mathsf{A}_{1}}\tilde{\eta}_{\vec{\omega}}
\mathrm{Tr}(\tilde{\rho}_{\vec{\omega}}M_{\vec{\omega}})
+\sum_{i\in\mathsf{A}_{0}}\eta_{i}\mathrm{Tr}(\rho_{i}M_{(i,?)})+\sum_{j\in\mathsf{A}_{1}}\eta_{j}\mathrm{Tr}(\rho_{j}M_{(?,j)})
\Bigg],
\end{eqnarray}
where 
\begin{equation}\label{eq:trhow}
\begin{array}{rcl}
\tilde{\eta}_{\vec{\omega}}
=\frac{1}{2}\sum_{b\in\{0,1\}}\eta_{w_{b}},\ 
\tilde{\rho}_{\vec{\omega}}
=\frac{\sum_{b\in\{0,1\}}\eta_{w_{b}}\rho_{\omega_{b}}}{\sum_{b'\in\{0,1\}}\eta_{w_{b'}}}.
\end{array}
\end{equation}
We also note that 
Inequality~\eqref{eq:pfdc} is a necessary but not sufficient condition for $p_{\rm L}^{\rm PI}(\mathcal{E})=1$ because
$p_{\rm L}^{\rm PI}(\mathcal{E})=1$ implies $p_{\rm G}^{\rm PI}(\mathcal{E})=1$ but not vice versa.\\
\indent For an upper bound of $p_{\rm L}^{\rm PI}(\mathcal{E})$, let us consider the following quantity,
\begin{equation}\label{eq:plelocc}
p_{\rm guess}(\mathcal{E})=\max_{\substack{\{M_{i}\}_{i\in\Lambda}:\\{\rm POVM}}}\sum_{i\in\Lambda}\eta_{i}\mathrm{Tr}(\rho_{i}M_{i}),
\end{equation}
which is the maximum average probability of correct guessing the prepared state when the available measurements are limited to LOCC measurements without inconclusive results\cite{hels1976,hole1979,yuen1975,bae2013}. The following lemma shows that $p_{\rm guess}(\mathcal{E})$ can be used as an upper bound of $p_{\rm L}^{\rm PI}(\mathcal{E})$.
\begin{lemma}\label{lem:qlpieple}
For each $(\omega_{0},\omega_{1})\in\mathsf{A}_{0}\times\mathsf{A}_{1}$, if
there is no nonzero product vector $|v\rangle$ satisfying
\begin{equation}\label{eq:npvc}
\begin{array}{c}
\langle\varphi_{i}|v\rangle\neq0\
\forall i\in\{\omega_{0},\omega_{1}\},\
\langle\varphi_{j}|v\rangle=0\ 
\forall j\in\Lambda\setminus\{\omega_{0},\omega_{1}\},
\end{array}
\end{equation} 
then $p_{\rm guess}(\mathcal{E})$ is an upper bound of $p_{\rm L}^{\rm PI}(\mathcal{E})$.
\end{lemma}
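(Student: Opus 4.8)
The plan is: given any unambiguous LOCC measurement $\{M_{\vec{\omega}}\}_{\vec{\omega}\in\Omega}$, construct from it a POVM $\{N_{i}\}_{i\in\Lambda}$ with no inconclusive outcome whose average probability of correctly guessing the prepared state is at least the success probability of $\{M_{\vec{\omega}}\}_{\vec{\omega}\in\Omega}$ in the OUD of $\mathcal{E}$ with PI. Since $\{N_{i}\}_{i\in\Lambda}$ is then admissible in the maximization defining $p_{\rm guess}(\mathcal{E})$, taking the supremum over all such $\{M_{\vec{\omega}}\}_{\vec{\omega}\in\Omega}$ yields $p_{\rm L}^{\rm PI}(\mathcal{E})\leqslant p_{\rm guess}(\mathcal{E})$.

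The key step is to analyze the ``doubly conclusive'' elements $M_{(\omega_{0},\omega_{1})}$ with $\omega_{0}\in\mathsf{A}_{0}$ and $\omega_{1}\in\mathsf{A}_{1}$. The error-free condition~\eqref{eq:udmpi} forces $\mathrm{Tr}(\rho_{j}M_{(\omega_{0},\omega_{1})})=0$ for every $j\in\Lambda\setminus\{\omega_{0},\omega_{1}\}$. Because an LOCC measurement is separable, $M_{(\omega_{0},\omega_{1})}=\sum_{k}|v_{k}\rangle\!\langle v_{k}|$ for some product vectors $|v_{k}\rangle$, and the above orthogonality relations force $\langle\varphi_{j}|v_{k}\rangle=0$ for all $k$ and all $j\in\Lambda\setminus\{\omega_{0},\omega_{1}\}$. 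The hypothesis of the lemma then rules out $\langle\varphi_{\omega_{0}}|v_{k}\rangle\neq0$ and $\langle\varphi_{\omega_{1}}|v_{k}\rangle\neq0$ holding simultaneously, so each $|v_{k}\rangle$ is orthogonal to $|\varphi_{\omega_{0}}\rangle$ or to $|\varphi_{\omega_{1}}\rangle$. Partitioning the index set accordingly gives a decomposition $M_{(\omega_{0},\omega_{1})}=M_{(\omega_{0},\omega_{1})}^{(0)}+M_{(\omega_{0},\omega_{1})}^{(1)}$ with both summands positive semidefinite, where $M_{(\omega_{0},\omega_{1})}^{(0)}$ is orthogonal to $\rho_{i}$ for every $i\in\mathsf{A}_{0}$ and $M_{(\omega_{0},\omega_{1})}^{(1)}$ is orthogonal to $\rho_{j}$ for every $j\in\mathsf{A}_{1}$; here I use that $\mathsf{A}_{0}\setminus\{\omega_{0}\}$ and $\mathsf{A}_{1}\setminus\{\omega_{1}\}$ both lie inside $\Lambda\setminus\{\omega_{0},\omega_{1}\}$.

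Next I would define $N_{i}=M_{(i,?)}+\sum_{j\in\mathsf{A}_{1}}M_{(i,j)}^{(1)}$ for $i\in\mathsf{A}_{0}$ and $N_{j}=M_{(?,j)}+\sum_{i\in\mathsf{A}_{0}}M_{(i,j)}^{(0)}$ for $j\in\mathsf{A}_{1}$. Since $\{M_{\vec{\omega}}\}_{\vec{\omega}\in\Omega}$ sums to $\mathbbm{1}$, these elements sum to $\mathbbm{1}-M_{(?,?)}$, so adding the leftover $M_{(?,?)}$ to any one of them produces a POVM $\{N_{i}\}_{i\in\Lambda}$. Using $\mathrm{Tr}(\rho_{\omega_{0}}M_{(\omega_{0},\omega_{1})}^{(0)})=0=\mathrm{Tr}(\rho_{\omega_{1}}M_{(\omega_{0},\omega_{1})}^{(1)})$, a term-by-term comparison shows that $\sum_{i\in\Lambda}\eta_{i}\mathrm{Tr}(\rho_{i}N_{i})$ reproduces exactly the contributions $\sum_{i\in\mathsf{A}_{0}}\eta_{i}\mathrm{Tr}(\rho_{i}M_{(i,?)})$, $\sum_{j\in\mathsf{A}_{1}}\eta_{j}\mathrm{Tr}(\rho_{j}M_{(?,j)})$, and, for each $(\omega_{0},\omega_{1})\in\mathsf{A}_{0}\times\mathsf{A}_{1}$, $\eta_{\omega_{0}}\mathrm{Tr}(\rho_{\omega_{0}}M_{(\omega_{0},\omega_{1})})+\eta_{\omega_{1}}\mathrm{Tr}(\rho_{\omega_{1}}M_{(\omega_{0},\omega_{1})})$ that together make up the success probability of $\{M_{\vec{\omega}}\}_{\vec{\omega}\in\Omega}$, plus a nonnegative term coming from the absorbed $M_{(?,?)}$; hence $\sum_{i\in\Lambda}\eta_{i}\mathrm{Tr}(\rho_{i}N_{i})$ is at least that success probability.

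The genuinely delicate point is the decomposition in the second paragraph: one must combine the separability of the LOCC elements with the non-existence hypothesis to split each doubly conclusive $M_{(\omega_{0},\omega_{1})}$ into a piece blind to $\mathcal{E}_{0}$ and a piece blind to $\mathcal{E}_{1}$ without generating cross terms, so that each piece can be reassigned to a single guessing outcome. Everything after that — the completeness of $\{N_{i}\}_{i\in\Lambda}$ and the accounting of the success probability — is routine bookkeeping.
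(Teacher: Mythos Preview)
Your argument is correct and close in spirit to the paper's, but it treats the one delicate step more carefully than the paper does. The paper's proof asserts that, under the hypothesis, no nonzero separable operator can serve as a doubly conclusive $M_{(\omega_{0},\omega_{1})}$, so that the contribution $\sum_{\vec{\omega}\in\mathsf{A}_{0}\times\mathsf{A}_{1}}\tilde{\eta}_{\vec{\omega}}\mathrm{Tr}(\tilde{\rho}_{\vec{\omega}}M_{\vec{\omega}})$ simply ``disappears''; it then relaxes the remaining elements $M_{(i,?)},M_{(?,j)}$ to a four-outcome measurement and bounds by $p_{\rm guess}(\mathcal{E})$. Your decomposition $M_{(\omega_{0},\omega_{1})}=M_{(\omega_{0},\omega_{1})}^{(0)}+M_{(\omega_{0},\omega_{1})}^{(1)}$ is precisely what is needed to make that step rigorous: the hypothesis does \emph{not} force $M_{(\omega_{0},\omega_{1})}=0$ (for instance, in Example~\ref{ex:unlock} a nonnegative combination of $|\tilde{\varphi}_{\omega_{0}}\rangle\!\langle\tilde{\varphi}_{\omega_{0}}|$ and $|\tilde{\varphi}_{\omega_{1}}\rangle\!\langle\tilde{\varphi}_{\omega_{1}}|$ is separable and satisfies the error-free constraint), but it \emph{does} force every product-vector summand of $M_{(\omega_{0},\omega_{1})}$ to be orthogonal to all of $\mathcal{E}_{0}$ or to all of $\mathcal{E}_{1}$, so the two pieces can be reassigned to singly conclusive outcomes without changing the success probability. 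After this redistribution both proofs finish the same way, and your absorption of $M_{(?,?)}$ into one $N_{i}$ only adds a nonnegative term. In short, you and the paper take the same route; you are filling in the detail the paper leaves implicit.
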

\begin{proof}
The assumption in \eqref{eq:npvc} implies that
for each $\vec{\omega}=(\omega_{0},\omega_{1})\!\in\!\mathsf{A}_{0}\!\times\!\mathsf{A}_{1}$, there does not exist any nonzero separable $M_{\vec{\omega}}\succeq0$ that unambiguously detects the state $\rho_{\omega_{0}}$ or $\rho_{\omega_{1}}$ depending on post-measurement information $b=0$ or $1$, respectively.
Then, the term $\sum_{\vec{\omega}\in\mathsf{A}_{0}\times\mathsf{A}_{1}}\tilde{\eta}_{\vec{\omega}}\mathrm{Tr}(\tilde{\rho}_{\vec{\omega}}M_{\vec{\omega}})$ in Eq.~\eqref{eq:qlpie} disappears. Thus, we have
\begin{eqnarray}
p_{\rm L}^{\rm PI}(\mathcal{E})
&=&\max_{\substack{{\rm Eq.\eqref{eq:udmpi}}\\{\rm LOCC}}}\Bigg[\sum_{i\in\mathsf{A}_{0}}\eta_{i}\mathrm{Tr}(\rho_{i}M_{(i,?)})
+\sum_{j\in\mathsf{A}_{1}}\eta_{j}\mathrm{Tr}(\rho_{j}M_{(?,j)})
\Bigg]\nonumber\\
&\leq&\max_{\substack{\{M_{i}\}_{i\in\Lambda}:\\{\rm LOCC}}}\Bigg[\sum_{i\in\mathsf{A}_{0}}\eta_{i}\mathrm{Tr}(\rho_{i}M_{i})
+\sum_{j\in\mathsf{A}_{1}}\eta_{j}\mathrm{Tr}(\rho_{j}M_{j})
\Bigg]=p_{\rm guess}(\mathcal{E}),
\end{eqnarray}
where the inequality is from the fact that
$p_{\rm guess}(\mathcal{E})$ is the maximum obtained from measurements without any constraint, 
whereas $p_{\rm L}^{\rm PI}(\mathcal{E})$ is the maximum obtained from 
unambiguous LOCC measurements.
\end{proof}
\section*{Acknowledgements}
This work was supported by Quantum Computing Technology Development Program(NRF-2020M3E4A1080088) through the National Research Foundation of Korea(NRF) grant funded by the Korea government(Ministry of Science and ICT).

\end{document}